\documentclass[11pt]{article}

\usepackage[english]{babel}

\usepackage{graphicx}
\usepackage[colorlinks=true, allcolors=blue]{hyperref}

\usepackage[title]{appendix}

\usepackage{amssymb,amsmath,amsthm,amsfonts,amstext}
\usepackage[left=1.5cm,right=1.5cm,top=1.5cm,bottom=1.5cm]{geometry}

\usepackage{setspace}
\usepackage{enumitem}

\usepackage{enumerate}

\usepackage[mathscr]{euscript}

\usepackage{soul}

\usepackage[dvipsnames]{xcolor}

\usepackage{tikz}
\usetikzlibrary{arrows}
\usetikzlibrary{arrows.meta}
\usetikzlibrary{shapes}
\usetikzlibrary{backgrounds}
\usetikzlibrary{positioning}
\usetikzlibrary{decorations.markings}
\usetikzlibrary{patterns}
\usetikzlibrary{calc}
\usetikzlibrary{fit}
\usetikzlibrary{decorations}
\usetikzlibrary{decorations.pathreplacing}

\usepackage[framemethod=TikZ]{mdframed}

\usepackage[noend]{algpseudocode}
\makeatletter
\def\BState{\State\hskip-\ALG@thistlm}
\makeatother

\usepackage[ruled]{algorithm2e}

\newtheorem{mdalgorithm}{Algorithm}

\newenvironment{ourbox}{\begin{mdframed}[hidealllines=false,innerleftmargin=10pt,backgroundcolor=white!10,innertopmargin=2pt,innerbottommargin=5pt,roundcorner=10pt]}{\end{mdframed}}

\newtheorem{theorem}{Theorem}

\newtheorem{proposition}[theorem]{Proposition}
\newtheorem{lemma}[theorem]{Lemma}
\newtheorem{fact}{Fact}

\newtheorem*{remark}{Remark}
\newtheorem{definition}{Definition}

\newtheorem*{assumption}{Assumption}

\newcommand\IGNORE[1]{}

\newcommand{\R}{\ensuremath{\mathbb R}}

\newcommand{\opt}{\textsc{opt}}

\newcommand{\F}{\mathcal{F}} 
\newcommand{\Ls}{\mathscr{L}}

\newcommand{\T}{\mathscr{T}}

\newcommand\hE{\widehat{E}}

\newcommand\ASC{\mathrm{Cover\,Small\,Cuts}}

\newcommand{\Csub}{\mathcal{C}_{\hE}}
\newcommand{\Fsub}{\mathcal{F}_{\hE}}

\newcommand{\Flamsub}{\mathcal{F}_{\hE}^{\text{lam}}}

\newcommand\hJ{\widehat{J}}
\newcommand{\hL}{\widehat{\mathscr{L}}}

	\newcommand{\propgamma}{property\hbox{$\;(\gamma)$}}
	\newcommand{\propgamstar}{property\hbox{$\;(\gamma^{\star})$}}

\title{A $5$-Approximation Analysis for the Cover Small Cuts Problem}

\author{
\large
Miles Simmons\thanks{
        {\tt mjsimmons@uwaterloo.ca}.
        Department of Combinatorics \& Optimization, University of Waterloo, Canada.}
\and
Ishan Bansal\thanks{
        {\tt ib332@cornell.edu}.
	Amazon, Bellevue, WA, USA. This work is external and does not relate to the position at Amazon. }
\and
Joseph Cheriyan\thanks{
{\tt jcheriyan@uwaterloo.ca}.
        Department of Combinatorics \& Optimization, University of Waterloo, Canada.}
}

\begin{document}

\maketitle

\begin{center}
\textit{
The woods are lovely, dark and deep,\\
But I have promises to keep,\\
And miles to go before I sleep,\\ 
And miles to go before I sleep.\\
  \mbox{~} \qquad \qquad --Robert Frost}
\end{center}

\begin{abstract}{
In the $\ASC$ problem,
we are given a capacitated (undirected) graph $G=(V,E,u)$
and a threshold value $\lambda$,
as well as a set of links $L$ with end-nodes in $V$
and a non-negative cost for each link $\ell\in L$;
the goal is to find a minimum-cost set of links such that
each non-trivial cut of capacity less than $\lambda$ is covered by a link.

Bansal~et~al.\ \cite{BCGI24} showed that
the WGMV primal-dual algorithm, due to Williamson~et~al.\ \cite{WGMV95},
achieves approximation ratio~$16$ for the $\ASC$ problem;
their analysis uses the notion of a pliable family of sets that
satisfies a combinatorial property.
Later, Bansal \cite{B2024,B2025} and then Nutov \cite{N2025:mfcs,N2025:tightanalysis} proved that
the same algorithm achieves approximation ratio~$6$.
We show that the same algorithm achieves approximation ratio~$5$,
by using a stronger notion, namely,
a pliable family of sets that satisfies symmetry and structural submodularity.
}
\end{abstract}

\section{Introduction \label{sec:intro}}
{
Williamson~et~al.\ (WGMV) \cite{WGMV95} defined a family of sets $\F$ to be
\textit{uncrossable} if the following holds:
for any pair of sets $A,B\in\F$,
both $A\cap{B},A\cup{B}$ are in $\F$, or both $A-B,B-A$ are in $\F$.
They used this notion to design and analyse a primal-dual approximation
algorithm for the problem of covering an uncrossable family of sets,
and they proved an approximation ratio of two for their algorithm.
(The algorithmic problem is as follows:
The input consists of a family of sets $\F$ of a ground-set $V$,
a set of links $L$, where each link is an unordered pair of nodes of $V$,
and a non-negative cost for each link in $L$;
the goal is to find a minimum-cost set of links $L'$ such that
each set $S\in\F$ is covered by a link of $L'$,
i.e., $\delta_{L'}(S)\not=\emptyset,\;\forall{S}\in\F$.)
Recently, Bansal~et~al.\ \cite{BCGI24} defined a family of sets $\F$ to be \textit{pliable} if
the following holds:
for any pair of sets $A,B\in\F$,
at least two of the (four) sets $A\cap{B},A\cup{B},A-B,B-A$ are in $\F$.
Bansal~et~al.\ \cite{BCGI24} showed that the WGMV primal-dual algorithm \cite{WGMV95}
achieves an approximation ratio of~$16$ for
the problem of covering a pliable family of sets that satisfies \propgamma{}.
(We discuss \propgamma{} in the following section;
it is a combinatorial property, and the analysis of \cite{BCGI24} relies on it.)

In the $\ASC$ problem,
we are given a capacitated (undirected) graph $G=(V,E,u)$
and a threshold value $\lambda$,
as well as a set of links $L$ with end-nodes in $V$
and a non-negative cost for each link $\ell\in L$;
the goal is to find a minimum-cost set of links such that
each non-trivial cut of capacity less than $\lambda$ is covered by a link.

The $\ASC$ problem is a special case of the problem of
covering a pliable family of sets that satisfies \propgamma{},
hence, the WGMV primal-dual algorithm \cite{WGMV95}
achieves approximation ratio~$16$ for the former problem.
Later, Bansal \cite{B2024,B2025}, and then Nutov \cite{N2025:mfcs,N2025:tightanalysis} proved 
approximation ratio~$6$ for the same algorithm for 
a generalization of the $\ASC$ problem, namely,
the problem of covering a pliable family of sets that satisfies both \propgamma{} and
the (so-called) sparse crossing property (see section~\ref{sec:prelims}).
Moreover, Nutov \cite{N2025:mfcs} shows that the $6$-approximation analysis is tight
for the generalization.
There are other results by Nutov on related topics, see \cite{N2023,N2024:waoa,N2024}
(we mention that these papers are not directly relevant for the results in our paper).

Diestel~et~al.\ \cite{D2017,DEE2017,DEW2019,DO2019,EKT2021} introduced the notion of
abstract separation systems that satisfy a submodularity property, and
they call this structural submodularity.
One of their motivations was to
identify the few structural assumptions one has to make of a set
of objects called ‘separations’ in order to capture the essence of
tangles in graphs, and thereby make them applicable in wider contexts.

A pliable family of sets $\F$ satisfies \textit{structural submodularity} if the following holds:
for any pair of crossing sets $A,B\in\F$,
at least one of the sets $A\cap{B},A\cup{B}$ is in $\F$, and
at least one of the sets $A-B,B-A$ is in $\F$.
A family of sets $\F$ of a ground-set $V$ such that
$S\in\F$ iff $V-S\in\F$ is said to satisfy \textit{symmetry}.

We present an improved analysis of the WGMV primal-dual algorithm for the $\ASC$ problem
that achieves approximation ratio~$5$;
our analysis uses the notion of a pliable family of sets that
satisfies symmetry and structural submodularity.

We apply a result of Bansal \cite{B2024,B2025}, to improve the approximation ratio from $6$ to $5$. 
Bansal's result states the following:
The approximation ratio for the WGMV primal-dual algorithm applied
to the problem of covering a pliable family of sets $\F$ is $(3+\rho(\F))$, where $\rho(\F)$
denotes the so-called crossing density of $\F$.
Section~\ref{sec:prelims} discusses this result in detail.
We show that $\rho(\F)\leq2$ for a pliable family of sets $\F$ that
satisfies symmetry and structural submodularity, see section~\ref{sec:main}.

Recently, Nutov \cite{N2025:example} constructed an example of the $\ASC$ problem such that
the WGMV primal-dual algorithm finds a solution of cost $(5-\epsilon)\opt$,
where $\opt$ denotes the cost of an optimal (integral) solution and
$\epsilon>0$ is an arbitrarily small number.
This shows that our (upper~bound) analysis is tight, and concludes the quest
for the tight approximation ratio for the WGMV primal-dual algorithm applied to the $\ASC$ problem.

Simmons, in his thesis, \cite{S2025}, uses the notion of a strongly pliable family of sets.
This notion is the same as the notion of structural submodularity of Diestel~et~al.\ \cite{DEW2019},
and, in this paper, we use the term structural submodularity (rather than strongly pliable).
The results in this paper are based on a sub-chapter of the first author's thesis,
see \cite[Chapter~2.3.3]{S2025}.
}
\section{	Preliminaries \label{sec:prelims}}
{
For a positive integer $k$, we use $[k]$ to denote the set $\{1,2,\dots,k\}$.
A pair of subsets $A,B$ of $V$ (the ground-set) is said to \textit{cross} if each of the four sets
$A\cap{B}, V-(A\cup{B}), A-B, B-A$ is non-empty.

\begin{assumption}
When discussing a family of sets $\F$, we assume that the empty set and
the ground-set are not in $\F$. (Otherwise, the problem of covering $\F$ would be infeasible.)
\end{assumption}

For a graph $G=(V,E)$ and $S\subseteq{V}$, the cut $\delta_{E}(S)$
refers to the set of edges that have exactly one end-node in $S$;
$\delta_{E}(S)$ is called a {non-trivial} cut if $\emptyset\neq{S}\subsetneq{V}$.

For a family of sets $\F$ of a ground-set $V$ and any set of edges $\hE$
(where each edge in $\hE$ has both ends in $V$),
let $\Fsub$ denote the subfamily $\{S\in\F\,:\,\delta_{\hE}(S)=\emptyset\}$, i.e.,
$\Fsub$ consists of the sets of $\F$ that are not covered by $\hE$.
Moreover, let $\Csub$ denote the family of inclusion~minimal sets of $\Fsub$.
For a pair of sets $A,B\subseteq{V}$, if an edge $e$ is in one of
$\delta_{E}(A\cap{B}), \delta_{E}(A\cup{B}), \delta_{E}(A-{B}), \delta_{E}(B-{A})$,
then $e$ is in one of $\delta_{E}(A), \delta_{E}({B})$.
This implies the next fact; also see \cite[Facts~2.1, 2.2]{BCGI24}.

\begin{fact} \label{fact:restriction}
Let $\hE$ be any set of edges.
If a family of sets $\F$ is symmetric, then $\Fsub$ is symmetric.
If a family of sets $\F$ is pliable, then $\Fsub$ is pliable.
If a family of sets $\F$ satisfies structural submodularity, then so does $\Fsub$.
\end{fact}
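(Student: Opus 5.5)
The three claims in Fact~\ref{fact:restriction} share one observation: the sets $\Fsub$ are exactly the sets of $\F$ whose cuts in $\hE$ are empty, and emptiness of cuts behaves well under complementation and under the four Boolean combinations. So the plan is to take $A,B\in\Fsub$, apply the corresponding property of $\F$ to get certain sets in $\F$, and then check that each such set also has an empty $\hE$-cut.

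\textbf{Symmetry.} This needs only that $\delta_{\hE}(S)=\delta_{\hE}(V-S)$ for every $S\subseteq V$, since an edge has exactly one end in $S$ iff it has exactly one end in $V-S$. Take $S\in\Fsub$. Then $V-S\in\F$ by symmetry of $\F$, and $\delta_{\hE}(V-S)=\delta_{\hE}(S)=\emptyset$, so $V-S\in\Fsub$. The converse is the same argument applied to $V-S$.

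\textbf{Pliability and structural submodularity.} Here I use the observation stated just before the fact, applied with $\hE$ in place of $E$. For any $A,B\subseteq V$, each of
$\delta_{\hE}(A\cap B)$, $\delta_{\hE}(A\cup B)$, $\delta_{\hE}(A-B)$, $\delta_{\hE}(B-A)$ is contained in $\delta_{\hE}(A)\cup\delta_{\hE}(B)$. To verify this, take an edge $uv$ with exactly one end, say $u$, in $X\cap Y$, where $X\in\{A,V-A\}$ and $Y\in\{B,V-B\}$. Then $v\notin X$ or $v\notin Y$, so $uv$ crosses $X$ or $Y$, and hence crosses $A$ or $B$. The union case follows from the intersection case by complementation, since $\delta(A\cup B)=\delta\big((V-A)\cap(V-B)\big)$.

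Consequently, if $A,B\in\Fsub$, then all four sets $A\cap B$, $A\cup B$, $A-B$, $B-A$ have empty $\hE$-cuts. Whichever of these sets $\F$ guarantees to lie in $\F$ therefore also lie in $\Fsub$:
\begin{itemize}
\item For pliability, at least two of the four sets lie in $\F$, hence in $\Fsub$.
\item For structural submodularity, crossing is a property of $A$ and $B$ alone and does not depend on the family. So if $A,B\in\Fsub$ cross, one of $A\cap B,A\cup B$ and one of $A-B,B-A$ lie in $\F$, and hence in $\Fsub$.
\end{itemize}

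Nothing here is a real obstacle. The only point needing care is the containment of the four cuts in $\delta_{\hE}(A)\cup\delta_{\hE}(B)$; the cleanest way to handle it is to reduce everything to the intersection case via complements, as above.
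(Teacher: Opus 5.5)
Your proof is correct and follows the paper's route: the paper also derives the fact from the observation that each of $\delta_{\hE}(A\cap B)$, $\delta_{\hE}(A\cup B)$, $\delta_{\hE}(A-B)$, $\delta_{\hE}(B-A)$ lies in $\delta_{\hE}(A)\cup\delta_{\hE}(B)$, and for symmetry from $\delta_{\hE}(S)=\delta_{\hE}(V-S)$. You additionally verify that observation explicitly by reducing all four cases to $X\cap Y$ with $X\in\{A,V-A\}$ and $Y\in\{B,V-B\}$; since the paper defines structural submodularity only for pliable families, you might also note that pliability of $\Fsub$ comes from your second part.
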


A family of sets $\F$ satisfies \propgamma{} if
for any set of edges $\hE$ and sets $S_0, S_1 \in\Fsub$, $S_1\subsetneq S_0$,
if an inclusion~minimal set $C\in\Fsub$ crosses both $S_0,S_1$, then
the set $S_0 - (C \cup S_1)$ is either empty or it is in $\Fsub$, \cite{BCGI24}.

A family of sets $\F$ satisfies the \textit{sparse crossing property} if
for any set $S\in\F$,
the number of inclusion~minimal sets of $\F$ that cross $S$ is zero or one.

\begin{lemma} \label{lem:disjointcores-strongpliable}
Let $\F$ be a pliable family that satisfies symmetry and structural submodularity.
Then the inclusion~minimal sets of $\F$ are (pairwise) disjoint.
\end{lemma}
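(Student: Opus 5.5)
Suppose, for contradiction, that two distinct inclusion-minimal sets $C_1,C_2\in\F$ intersect. By minimality neither contains the other, so $C_1-C_2$ and $C_2-C_1$ are both non-empty, and $C_1\cap C_2\neq\emptyset$. The proof splits into two cases, according to whether $C_1\cup C_2=V$.

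\textbf{Case 1: $V-(C_1\cup C_2)\neq\emptyset$.} Then $C_1,C_2$ cross, so structural submodularity applies: at least one of $C_1\cap C_2$, $C_1\cup C_2$ lies in $\F$, and at least one of $C_1-C_2$, $C_2-C_1$ lies in $\F$. Each of $C_1-C_2$ and $C_2-C_1$ is a non-empty proper subset of $C_1$ or of $C_2$, respectively. Hence whichever of them belongs to $\F$ contradicts the minimality of $C_1$ or $C_2$. (The same conclusion already follows from pliability, since at least two of the four sets lie in $\F$ and three of them are proper non-empty subsets of $C_1$ or $C_2$. Structural submodularity just makes the step immediate.)

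\textbf{Case 2: $C_1\cup C_2=V$.} Here I use symmetry: $V-C_1$ and $V-C_2$ belong to $\F$. Since $V-C_1\subseteq C_2-C_1 \subsetneq C_2$, and in fact $V-C_1=C_2-C_1$, the set $V-C_1$ is a non-empty member of $\F$ properly contained in $C_2$ (proper because $C_1\cap C_2\neq\emptyset$). This contradicts the minimality of $C_2$.

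Both cases give contradictions, so the minimal sets are pairwise disjoint. I do not expect any real obstacle. The only point requiring care is the non-crossing situation, where the union covers $V$ and structural submodularity does not apply; symmetry handles exactly that case. I will also note that the empty set is not in $\F$ (by the standing Assumption), so non-emptiness of the differences is what makes them legitimate smaller members of $\F$.
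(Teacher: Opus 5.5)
Your proof is correct and follows the paper's argument: two intersecting minimal sets that cross yield, by structural submodularity, a member $C_1-C_2$ or $C_2-C_1$ of $\F$ lying strictly inside $C_1$ or $C_2$. You also treat the non-crossing case $C_1\cup C_2=V$ explicitly via symmetry, which the paper leaves implicit; your pliability remark would cover that case too, since pliability applies to every pair and $V\notin\F$.
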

\begin{proof}
Let $C_1,C_2\in\F$ be inclusion~minimal sets.
If $C_1,C_2$ are not disjoint, then $\F$ would contain a proper subset of
$C_1$ or $C_2$, and this would contradict inclusion~minimality.
(In particular, if $C_1,C_2$ cross, then one of $C_1-C_2$ or $C_2-C_1$ would be in $\F$.)
\end{proof}

\begin{lemma} \label{lem:sparsecrossing-strongpliable}
Let $\F$ be a pliable family that satisfies symmetry and structural submodularity.
Then $\F$ satisfies the sparse crossing property.
\end{lemma}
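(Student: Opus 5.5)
Suppose, for contradiction, that some $S\in\F$ is crossed by two distinct inclusion~minimal sets $C_1,C_2$ of $\F$. By Lemma~\ref{lem:disjointcores-strongpliable}, $C_1$ and $C_2$ are disjoint. The plan is to use structural submodularity on the crossing pairs $(S,C_1)$ and $(S,C_2)$, together with symmetry, to produce a member of $\F$ that is a proper subset of $C_1$ or $C_2$. That contradicts minimality.

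\textbf{Step 1 (one side of $S$).} Since $C_i$ crosses $S$, structural submodularity gives that at least one of $C_i\cap S$ and $C_i\cup S$ is in $\F$. It also gives that at least one of $C_i-S$ and $S-C_i$ is in $\F$. Both $C_i\cap S$ and $C_i-S$ are nonempty proper subsets of $C_i$, so minimality of $C_i$ excludes them. Hence $C_i\cup S\in\F$ and $S-C_i\in\F$ for $i=1,2$.

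\textbf{Step 2 (other side, via symmetry).} By symmetry, $\bar S:=V-S\in\F$, and $C_i$ also crosses $\bar S$, since crossing is invariant under complementing one set. Repeating Step~1 with $\bar S$ in place of $S$ gives $\bar S-C_i\in\F$ for $i=1,2$. Applying symmetry again yields $S\cup C_i\in\F$, so this step alone is not yet contradictory.

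\textbf{Step 3 (combine $C_1$ and $C_2$).} The key move is to consider $T:=S-C_1\in\F$ and check whether $C_2$ crosses $T$. Because $C_1\cap C_2=\emptyset$, we have $C_2\cap T=C_2\cap S\neq\emptyset$ and $C_2-T=C_2-S\neq\emptyset$. Also $T-C_2\supseteq S-(C_1\cup C_2)$, and $V-(T\cup C_2)\supseteq C_1\cap S\neq\emptyset$. So $C_2$ crosses $T$ provided $T\not\subseteq C_2$. In that case structural submodularity applied to $(T,C_2)$ forces one of $C_2\cap T$ or $C_2-T$ into $\F$, which contradicts minimality of $C_2$. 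Indeed, in Step~1 the two options excluded were proper subsets of $C_i$, so minimality fails unless the union and difference options hold. The argument therefore needs a more careful choice.

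\textbf{Main obstacle.} As noted in Step~3, structural submodularity only guarantees one set from each pair. So the real task is to find a pair $(X,C_i)$ whose admissible outcomes are forced to contain a proper subset of $C_i$. I would try the pair $T=S-C_1$ and $C_2$, and also the pair $S\cup C_1$ and $C_2$, combining them with symmetry. Complementing $S\cup C_1$ gives $\bar S-C_1$; then $C_2\cap(\bar S-C_1)=C_2-S$ and $C_2-(\bar S-C_1)=C_2\cap S$. So crossings of $C_2$ with $S-C_1$ and with $\bar S-C_1$ both hold. Structural submodularity then yields from the first pair $(S-C_1)\cup C_2\in\F$, and from the second pair $(\bar S-C_1)\cup C_2\in\F$, whose complement is $(S\cup C_1)-C_2\in\F$. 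I would continue this way, using pliability on the pair $S-C_1$ and $S-C_2$ or on the pair $S\cup C_1$ and $S-C_2$, aiming to isolate $C_1\cap S$ or $C_2\cap S$ as a member of $\F$. I expect the contradiction to come from pliability applied to $A=S\cup C_1$ and $B=S-C_2$. There, $A-B=C_1\cup(C_2\cap S)$, $B-A=\emptyset$, $A\cap B=S-C_2$ and $A\cup B=S\cup C_1$. The degenerate cases where sets fail to cross, such as $S-(C_1\cup C_2)=\emptyset$, will need separate handling. In those cases symmetry should again reduce to a proper subset of some $C_i$ lying in $\F$.
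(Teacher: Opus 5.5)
\textbf{Gap: the proposal never reaches a contradiction.} Your Steps 1 and 2 are correct: $S\cup C_i\in\F$ and $S-C_i\in\F$. You also correctly notice that Step 3 fails. On the crossing pair $(S-C_1,C_2)$, structural submodularity only yields $(S-C_1)\cup C_2\in\F$ and $S-(C_1\cup C_2)\in\F$. It does not force $C_2\cap T$ or $C_2-T$ into $\F$.

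The step you finally rely on cannot work either. With $A=S\cup C_1$ and $B=S-C_2$ you have $B\subseteq S\subseteq A$, so $A\cap B=B$ and $A\cup B=A$ are already in $\F$. Pliability is automatically satisfied by any nested pair in $\F$, so it can never give a contradiction. Your computation of $A-B$ is also off: $C_1\cap S\subseteq B$, so in fact $A-B=(C_1-S)\cup(C_2\cap S)$. The remaining suggestions, including the degenerate cases, are left as ``I would continue this way.'' So there is no completed argument.

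\textbf{Missing idea.} You need to apply the \emph{difference} clause of structural submodularity to a genuinely crossing pair in $\F$ whose two differences are proper subsets of $C_1$ and $C_2$. The paper uses $A=S\cup C_1$ and $B=S\cup C_2$, which you already have in $\F$.

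Since $C_1\cap C_2=\emptyset$, you get $A-B=C_1-S\neq\emptyset$, $B-A=C_2-S\neq\emptyset$, and $A\cap B\supseteq S\neq\emptyset$. The remaining condition, $S\cup C_1\cup C_2\neq V$, is where symmetry enters. Otherwise $V-S\in\F$ lies inside $C_1\cup C_2$. It then contains a minimal set of $\F$, which by Lemma~\ref{lem:disjointcores-strongpliable} must equal $C_1$ or $C_2$. That is impossible, since each $C_i$ meets $S$.

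Hence $A,B$ cross, and structural submodularity puts $C_1-S$ or $C_2-S$ in $\F$, contradicting minimality.

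The pair $S-C_1$, $S-C_2$ that you mention also works, provided you use structural submodularity rather than pliability in the main case. Its differences are $S\cap C_2$ and $S\cap C_1$, so structural submodularity finishes when $S\not\subseteq C_1\cup C_2$. When $S\subseteq C_1\cup C_2$, pliability on this pair fails outright: among $\emptyset$, $S$, $S\cap C_2$, $S\cap C_1$, only $S$ lies in $\F$.
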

\begin{proof}
We use a contradiction argument. Suppose that $S\in\F$ crosses two
inclusion~minimal sets $C_1,C_2\in\F$;
by the previous lemma, $C_1,C_2$ are disjoint.
Then, since proper subsets of $C_1$ or $C_2$ cannot be in $\F$,
both $S\cup{C_1}$ and $S\cup{C_2}$ would be in $\F$.
Note that $S\cup{C_1}$ and $S\cup{C_2}$ cross ($S\cup{C_1}\cup{C_2}=V$ is not possible, by symmetry).
Then $(S\cup{C_1}) - (S\cup{C_2}) = C_1-S$ or
     $(S\cup{C_2}) - (S\cup{C_1}) = C_2-S$ is in $\F$; this gives a contradiction.
\end{proof}

The next lemma states that \propgamma{} holds for any
pliable family that satisfies symmetry and structural submodularity.
We prove a stronger statement in section~\ref{sec:main}, see Proposition~\ref{propos:gamstar}.

\begin{lemma} \label{lem:propgamma-strongpliable}
Let $\F$ be a pliable family that satisfies symmetry and structural submodularity.
Then $\F$ satisfies \propgamma{}.
\end{lemma}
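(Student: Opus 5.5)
By Fact~\ref{fact:restriction}, the family $\Fsub$ is pliable, symmetric and structurally submodular for any $\hE$. It therefore suffices to prove the following statement for a single family $\F$ with these three properties. Let $S_0,S_1\in\F$ with $S_1\subsetneq S_0$, and let $C$ be an inclusion~minimal set of $\F$ that crosses both $S_0$ and $S_1$. Then $D:=S_0-(C\cup S_1)$ is either empty or in $\F$.

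The plan is to first reduce to sets obtained from $S_0$ and $S_1$ by removing $C$, and then apply structural submodularity to a suitable crossing pair.

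\textbf{Step 1: use minimality of $C$.}
Since $C$ crosses $S_1$, structural submodularity gives that $C\cap S_1$ or $C\cup S_1$ is in $\F$. Since $C-S_1$ and $C\cap S_1$ are nonempty proper subsets of $C$, minimality of $C$ says neither of them is in $\F$.
\begin{itemize}
\item Hence $C\cup S_1\in\F$.
\item Applying the "difference" half of structural submodularity to the pair $C,S_1$ gives $C-S_1$ or $S_1-C$ in $\F$, hence $S_1-C\in\F$.
\item The same argument for $S_0$ gives $S_0-C\in\F$ and $C\cup S_0\in\F$.
\end{itemize}
Set $A:=S_0-C$ and $B:=S_1-C$, both in $\F$, with $B\subseteq A$. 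Note $A\neq B$ whenever $D\neq\emptyset$, and $D=A-B$.

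\textbf{Step 2: find a crossing pair.}
The pair $A,B$ is nested, so it cannot be used directly. Instead, use symmetry: $V-B\in\F$. Then $A$ and $V-B$ satisfy:
\begin{itemize}
\item $A\cap(V-B)=D$;
\item $A-(V-B)=B$;
\item $(V-B)-A=V-S_0\cup(C-S_1)$-type region, which contains $C-S_0\neq\emptyset$;
\item $V-(A\cup(V-B))=B-A=\emptyset$.
\end{itemize}
So this pair does not cross either. I will instead pair $A$ with $C\cup S_1$, which is in $\F$ by Step 1. For this pair:
\begin{itemize}
\item $A\cap(C\cup S_1)=B\neq\emptyset$ (since $S_1\not\subseteq C$);
\item $A-(C\cup S_1)=D$, nonempty by assumption;
\item $(C\cup S_1)-A\supseteq C\neq\emptyset$;
\item the complement of the union is $V-(S_0\cup C)$, which is nonempty because $C$ crosses $S_0$.
\end{itemize}
So they cross. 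Structural submodularity then gives $D\in\F$ or $(C\cup S_1)-A=C\cup(S_1\cap C)\cdots=C\cup(S_1-S_0)=C$ in $\F$ (note $S_1\subseteq S_0$, so $(C\cup S_1)-A=C$). But $C\in\F$ is true, so this alternative gives nothing, and this pair is inconclusive.

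\textbf{Step 3: fall back to the complement of $C\cup S_1$.}
Take instead $A$ and $W:=V-(C\cup S_1)\in\F$, which is in $\F$ by symmetry. For this pair:
\begin{itemize}
\item $A\cap W=D$;
\item $A-W=B$;
\item $W-A=V-(S_0\cup C)\neq\emptyset$;
\item the complement of the union is $C$, which is nonempty.
\end{itemize}
So $A$ and $W$ cross, and structural submodularity yields that $A\cap W=D$ or $A\cup W=V-C$ is in $\F$. Similarly, the pair $A'=V-A$ against $C\cup S_1$ can be combined with pliability.

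\textbf{Main obstacle: ruling out $V-C$.}
The hardest step is excluding the alternative that $V-C\in\F$ while $D\notin\F$. Symmetry makes $V-C\in\F$ automatically true, so it gives no contradiction. I would try to get the extra information from the difference half of structural submodularity: it gives $B$ or $W-A=V-(S_0\cup C)$ in $\F$, but both are in $\F$ anyway. My first attempt would therefore be to use the pair $S_0$ and $W$. This pair crosses, with:
\begin{itemize}
\item $S_0\cap W=D$;
\item $S_0-W$ containing $C\cap S_0$;
\item $W-S_0$ nonempty;
\item the complement of the union equal to $C-S_0$.
\end{itemize}
Structural submodularity then gives $D$ or $S_0\cup W=V-(C-S_0)$ in $\F$. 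In the second case, $C-S_0\in\F$ by symmetry, which is a proper nonempty subset of $C$. This contradicts minimality of $C$, so $D\in\F$.
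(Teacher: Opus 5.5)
Your closing argument is correct, and it is essentially the paper's proof. The paper first gets $C\cup S_1\in\F$ from minimality of $C$. It then applies the difference half of structural submodularity directly to the crossing pair $S_0$, $C\cup S_1$, which yields $D$ or $(C\cup S_1)-S_0=C-S_0$ in $\F$. Your pair $S_0$, $V-(C\cup S_1)$, used with the intersection/union half plus symmetry, is just the complemented form of this, so your two uses of symmetry are unnecessary. Step~2 was inconclusive only because you paired $A=S_0-C$ instead of $S_0$ with $C\cup S_1$, which made the second difference $C$ rather than $C-S_0$. Steps~2--3 are dead ends and should be dropped from a final write-up.
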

\begin{proof}
Let $\hE$ be a set of edges and
let the sets $S_0,S_1,C\in\Fsub$ satisfy the conditions of \propgamma{}.
Suppose that $S_0 - (C\cup{S_1})$ is nonempty; otherwise, we are done.
Since $C$ crosses $S_1$ (and no proper subset of $C$ is in $\Fsub$), $S_1\cup{C}$ is in $\Fsub$.
Observe that $S_0$ crosses $S_1\cup{C}$ (since $S_0,C$ cross, and $S_0 - (C\cup{S_1})$ is nonempty).
Then $S_0 - (S_1\cup{C}) \in\Fsub$, since $(S_1\cup{C})-S_0 = C-S_0 \not\in\Fsub$.
\end{proof}

\begin{lemma} \label{lem:cuts-strongpliable}
Let $G=(V,E,u)$ be a capacitated graph and let $\lambda\in\R$.
Then the family of non-trivial small cuts
$\{S\,:\, \emptyset\neq{S}\subsetneq{V},\: u(\delta_{E}(S)) < \lambda\}$ is
a pliable family that satisfies symmetry and structural submodularity.
\end{lemma}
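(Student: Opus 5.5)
Write $f(S)=u(\delta_E(S))$ for $S\subseteq V$, and let $\F$ denote the family of non-trivial small cuts in the statement. We check symmetry first, then structural submodularity, and then pliability.

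\textbf{Symmetry.} This is immediate. $\delta_E(S)=\delta_E(V-S)$, so $f(S)=f(V-S)$. Also $S$ is non-trivial if and only if $V-S$ is.

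\textbf{Two standard inequalities.} The rest rests on two facts about the cut function of a graph with non-negative capacities, valid for all $A,B\subseteq V$:
\begin{align*}
f(A)+f(B) &\ge f(A\cap B)+f(A\cup B),\\
f(A)+f(B) &\ge f(A-B)+f(B-A).
\end{align*}
Both follow by counting, edge by edge, the contribution of each edge type between the four atoms $A\cap B$, $A-B$, $B-A$ and $V-(A\cup B)$. The second inequality can also be derived from the first applied to $A$ and $V-B$, using symmetry of $f$.

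\textbf{Structural submodularity.} Let $A,B\in\F$ cross. Then $f(A)+f(B)<2\lambda$. By the first inequality, $\min\{f(A\cap B),f(A\cup B)\}<\lambda$. Since $A$ and $B$ cross, both $A\cap B$ and $V-(A\cup B)$ are non-empty, so both $A\cap B$ and $A\cup B$ are non-trivial. Hence at least one of them lies in $\F$. The same argument with the second inequality gives that at least one of $A-B$, $B-A$ lies in $\F$; here crossing guarantees $A-B$ and $B-A$ are non-empty proper subsets.

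\textbf{Pliability.} We need at least two of the four sets to lie in $\F$ for every pair $A,B\in\F$, not only for crossing pairs; this is the step requiring care. If $A,B$ cross, the previous step already supplies two sets. Otherwise at least one of the four atoms is empty, and we split into cases.
\begin{itemize}
\item If $A\cap B=\emptyset$, then $A-B=A$ and $B-A=B$ are both in $\F$.
\item If $A\subseteq B$, then $A\cap B=A$ and $A\cup B=B$ are both in $\F$. The case $B\subseteq A$ is symmetric.
\item If $A\cup B=V$ (with $A\cap B\neq\emptyset$ and neither set containing the other), then $A\cap B$ and $A\cup B=V$ are not both usable, since $V\notin\F$. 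Instead, $A-B=V-B$ and $B-A=V-A$ are both in $\F$ by symmetry.
\end{itemize}
In every case at least two of the four sets lie in $\F$. This case analysis is the only place where symmetry is needed for pliability, and it is where I expect the only real care to be required, namely making sure that no trivial set (the empty set or $V$) is counted.
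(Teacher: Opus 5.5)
Your proof is correct and follows the same route as the paper: symmetry from $\delta_E(S)=\delta_E(V-S)$, and structural submodularity from the two submodular inequalities for the cut function. You go further than the paper's short proof by checking explicitly that the sets obtained from a crossing pair are non-trivial, and by verifying pliability for non-crossing pairs through your case analysis (using symmetry when $A\cup B=V$); the paper leaves both points implicit.
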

\begin{proof}
Clearly, symmetry holds since $\delta_{E}(S)=\delta_{E}(V-S),\,\forall{S}\subseteq{V}$.
Structural submodularity follows from the submodular inequalities, see \cite[Chapter~1.2]{Frank:book};
in detail, for $A,B\subseteq{V}$, we have
$u(\delta_{E}(A)) + u(\delta_{E}(B)) \geq u(\delta_{E}(A\cap{B})) + u(\delta_{E}(A\cup{B}))$, and
$u(\delta_{E}(A)) + u(\delta_{E}(B)) \geq u(\delta_{E}(A-B)) + u(\delta_{E}(B-A))$.
\end{proof}

\subsection*{Crossing density and Bansal's theorem on the approximation ratio of the WGMV algorithm}
{
Bansal~\cite{B2024,B2025} introduced the notion of crossing density of a family of sets.

\begin{definition} The \textbf{crossing density} of a family of sets $\F$,
denoted $\rho(F)$, is the smallest positive integer such that
for any edge-set $\hE$ and a laminar subfamily $\Flamsub$ of $\Fsub$
the number of crossing pairs $(S,C)$, where $S\in\Flamsub$ and $C\in\Csub$,
	is $\leq \rho(F) \cdot |\Csub|$.
\end{definition}

\begin{theorem}\cite[Theorem~1]{B2025} \label{thm:Bcrossing}
The approximation ratio of the WGMV primal-dual algorithm applied to
the problem of covering a pliable family of sets $\F$ is $(3+\rho(F))$.
\end{theorem}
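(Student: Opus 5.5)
The plan is to follow the standard WGMV primal-dual analysis and isolate the one place where pliability, as opposed to uncrossability, costs something; that cost is exactly the number of crossing pairs, which $\rho(\F)$ controls. Recall the algorithm. It raises dual variables $y_C$ uniformly on the current minimal uncovered sets $C\in\Csub$, where $\hE$ is the set of links picked so far. It adds links that become tight, and at the end performs reverse delete to obtain the output $L'$. By the usual primal-dual accounting, the cost of $L'$ equals $\sum_{\ell\in L'}\sum_{S:\ell\in\delta(S)} y_S$. It therefore suffices to prove the following per-iteration inequality: for every iteration, with $\hE$ the links present at that iteration and $H=L'-\hE$,
\[
\sum_{C\in\Csub} |\delta_{H}(C)| \;\le\; (3+\rho(\F))\,|\Csub| .
\]
Summing this against the uniform dual increase then gives cost$(L')\le(3+\rho(\F))\sum_S y_S\le(3+\rho(\F))\,\opt$.

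\textbf{Step 1 (witnesses).} Reverse delete guarantees that every link $\ell\in H$ has a witness $S_\ell\in\Fsub$ with $\delta_H(S_\ell)=\{\ell\}$. Using pliability and an uncrossing argument as in \cite{BCGI24}, I would show that the witnesses can be chosen to form a laminar subfamily $\Flamsub\subseteq\Fsub$. The idea is that if two witnesses cross, one of the four derived sets lies in $\Fsub$ and can replace one of them while keeping $|\delta_H|=1$. Adding $V$ gives a rooted tree $T$ whose nodes are the witnesses, with a parent--child relation given by inclusion.

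\textbf{Step 2 (splitting the count).} Take a link $\ell\in\delta_H(C)$ and its witness $S_\ell$. There are two cases.
\begin{itemize}
\item If $C$ crosses $S_\ell$, charge $\ell$ to the crossing pair $(S_\ell,C)$. Each such pair receives at most one charge, because $S_\ell$ has only the link $\ell$ in its cut. By the definition of crossing density, the total charge of this kind is at most $\rho(\F)\,|\Csub|$.
\item If $C$ does not cross $S_\ell$, then $C$ is either inside $S_\ell$ or disjoint from it. In this case I assign $C$ to the smallest witness containing it and use the tree argument of WGMV. Leaves of $T$ contain some $C$, since minimal sets exist below each witness. The degree of a tree node counts the links of non-crossing type, and the average degree in a tree is below $2$.
\end{itemize}

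\textbf{Step 3 (the constant $3$).} The main obstacle is getting the constant $3$ in the non-crossing part. Two effects break the clean bound $2$. First, a node of $T$ may contain no active set. Second, a set $C$ may be "attached" to a witness even though $C$ crosses some other witnesses. I would handle this by bounding inactive nodes of degree $2$ (they contribute nothing net) and showing that inactive leaves cannot occur. The remaining slack, at most one extra link per active set, would be absorbed; in particular the link through the witness that crosses $C$ might be double counted, and this is where the additive $1$ comes from. In summary, I would first try to prove $\sum_C(\text{non-crossing incidences})\le 3|\Csub|$ by this degree-sum argument, and then combine it with Step 2 to obtain $3+\rho(\F)$.
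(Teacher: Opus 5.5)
The paper gives no proof of this theorem; it imports it from \cite{B2025}, so your proposal has to stand on its own. Your reduction is fine. Reverse delete makes $H=L'-\hE$ a minimal augmentation at every iteration, so the per-iteration bound suffices. The witnesses can be taken laminar by \cite[Lemma~3.2]{BCGI24}. Incidences $(\ell,C)$ with $C$ crossing $S_\ell$ inject into crossing pairs, so they total at most $\rho(\F)|\Csub|$.

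The gap is Step~3, which carries essentially the whole content of the theorem and which you only announce (``I would first try to prove\dots''). The mechanism you sketch also rests on a premise that is false for pliable families. It is \emph{not} true that an incidence $(\ell,C)$ with $C$ not crossing $S_\ell$ is a tree edge at the smallest witness containing $C$. Let $P$ be the parent of $S_\ell$ in $T$; the outer end of $\ell$ lies in $P-S_\ell$. If $C$ crosses $P$, is disjoint from $S_\ell$, and contains that outer end, then $(\ell,C)$ lands in your non-crossing bucket. Yet the smallest witness containing $C$ is a proper ancestor of $P$, and $\ell$ is not incident to it in $T$.

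Here is a concrete instance. Take $V=\{c,d_1,d_2,p,q\}$ and $\F=\{\{c\},\{d_1,d_2\},\{c,p\},\{c,d_1,p\},\{c,d_1,d_2,p\}\}$; one checks all ten pairs and finds $\F$ is pliable. With $\hE=\emptyset$ and costs $1$ on $cd_1$ and $10$ on $pq$, the algorithm outputs $H=\{cd_1,pq\}$, with laminar witnesses $\{c\}\subset\{c,d_1,p\}$. The core $\{d_1,d_2\}$ is disjoint from $\{c\}$, the witness of $cd_1$, but $cd_1\in\delta_H(\{d_1,d_2\})$. The smallest witness containing $\{d_1,d_2\}$ is $V$, whose only tree edge is $pq$. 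The node $\{c,d_1,p\}$ is inactive, has degree $2$, and still carries this incidence. So ``inactive degree-$2$ nodes contribute nothing'' fails as well.

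Such links are paid for neither by Step~2 nor by the WGMV degree sum over active nodes. If you count them honestly, assign to each witness the link endpoints in its private region. Then every inactive witness crossed by a core can exceed the tree allowance by one. The straightforward count therefore gives only
$\sum(\text{non-crossing incidences})\le 2|\Csub|+\#\{\text{crossed witnesses}\}$,
hence ratio $2+2\rho(\F)$ rather than $3+\rho(\F)$. With the paper's $\rho\le 2$, that would give $6$, not $5$, so the constant matters.

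To reach $3+\rho(\F)$ you need a genuine additional argument. You must either bound the number of such surplus witnesses by $|\Csub|$, or trade them against crossing pairs in a finer way. Your remark that ``the link through the witness that crosses $C$ might be double counted'' does not supply this, and nothing else in the proposal does.
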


Let $\F$ be a pliable family of sets.
We briefly discuss the notion of witness sets, and related notions. 
Let $\hE$ be any set of edges.
Let $\hJ$ be an inclusion~minimal set of links that covers $\Csub$.
For a link $\ell\in\hJ$, a \textit{witness set} is a set $S_{\ell}\in\Fsub$
such that $\delta_{\hJ}(S_{\ell})=\{\ell\}$.
Thus, a witness set of a link $\ell\in\hJ$ is a set of $\Fsub$ that is covered by $\ell$ and
is not covered by any other link of $\hJ$.
By \cite[Lemma~3.2]{BCGI24}, there exists a laminar family of witness~sets of $\hJ$.
In other words, for each link $\ell\in\hJ$, there exists a witness set $S_{\ell}\in\Fsub$
such that $\{ S_{\ell}\,:\,\ell\in\hJ \}$ is a laminar family.

\begin{remark}
In order to apply Theorem~\ref{thm:Bcrossing} to a family of sets $\F$,
one needs to prove an upper~bound on the crossing density of $\F$, $\rho(F)$.
W.l.o.g., let $\hE$ be any set of edges.
Formally speaking, one would have to consider all possible laminar subfamilies
$\Flamsub$ of the subfamily $\Fsub$.
In fact, Bansal's proof of Theorem~\ref{thm:Bcrossing} focuses on
a laminar family of witness~sets $\hL$
(corresponding to an inclusion~minimal set of links that covers $\Csub$).
Thus, if one can show that 
the number of crossing pairs $(S,C)$ such that $S\in\hL$ and $C\in\Csub$
is $\leq \hat{\rho} \cdot |\Csub|$, then, by the analysis in Bansal's proof of Theorem~\ref{thm:Bcrossing},
it follows that
the approximation ratio of the WGMV primal-dual algorithm (for covering $\F$) is $\leq(3+\hat{\rho})$.
\end{remark}
}
}
\section{ The crossing density of a pliable family that satisfies
	symmetry and structural submodularity \label{sec:main}}
{
In this section, we focus on a pliable family of sets $\F$ that satisfies
symmetry and structural submodularity.
Our goal is to show that the crossing density of $\F$, $\rho(\F)$, is $\leq2$.
Then, Theorem~\ref{thm:Bcrossing} implies that the approximation ratio of the WGMV
primal-dual algorithm applied to the problem of covering $\F$ is $\leq5$.

We introduce an extension of \propgamma{} called \propgamstar{}.

\begin{definition}
A family of sets $\F$ satisfies \propgamstar{} if
for any set of edges $\hE$ and sets $S_0, S_1,\dots,S_k \in\Fsub$ ($k\geq1$),
where $S_1,\dots,S_k$ are pairwise disjoint (proper) subsets of $S_0$,
if an inclusion~minimal set $C\in\Csub\subseteq\Fsub$ crosses all of $S_0,S_1,\dots,S_k$, then
the set $S_0 - (S_1\cup\dots\cup{S_k}\cup C)$ is either empty or it is in $\Fsub$.
\end{definition}

{
\begin{figure}[htb]
    \centering
    \begin{tikzpicture}[scale=0.75]
        \node at (-1,2) {$S_0$};
        \draw[black, fill = blue, fill opacity = 0.2, line width=2pt,radius = 50pt] (0,0) circle;
        \draw[red,fill=white, line width=2pt,radius = 50pt] (2,0) circle;
        \draw[black, line width=2pt,radius = 50pt] (0,0) circle;
        \node at (3,2) {$C$};
        \draw[black,line width = 2pt,fill = white] (0.5,1) ellipse [x radius = 20pt, y radius = 10pt];
        \draw[black,line width = 2pt,fill = white] (0,0) ellipse [x radius = 20pt, y radius = 10pt];
        \draw[black,line width = 2pt] (0.5,-1) ellipse [x radius = 20pt, y radius = 10pt];
        \draw[red, line width=2pt,radius = 50pt] (2,0) circle;
        \node at (-0.1,0) {$S_j$};
        \node at (0.5,-1) {$S_{j+1}$};
        \node at (-1.0,-0.6) {$D_{j}$};
    \end{tikzpicture}
    \caption{Illustration of \propgamstar{}.
	The shaded region indicates $D_{j}   = S_0 - (S_1 \cup\dots\cup S_{j} \cup C)$.
	}
\end{figure}
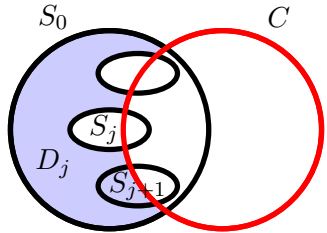
}

\begin{proposition}\label{propos:gamstar}
A pliable family of sets that satisfies symmetry and structural submodularity
necessarily satisfies \propgamstar{}.
\end{proposition}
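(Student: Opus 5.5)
We argue by induction on $j$, following the figure: define $D_j = S_0 - (S_1\cup\dots\cup S_j\cup C)$ for $0\le j\le k$. We show that for each $j$, the set $D_j$ is either empty or lies in $\Fsub$; the case $j=k$ is the proposition. By Fact~\ref{fact:restriction}, $\Fsub$ is pliable, symmetric and structurally submodular. By Lemma~\ref{lem:disjointcores-strongpliable} applied to $\Fsub$, $C$ is an inclusion-minimal set of $\Fsub$, so no proper subset of $C$ lies in $\Fsub$.

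\emph{Base case ($j=1$).} This is exactly \propgamma{} for the pair $S_1\subsetneq S_0$, which Lemma~\ref{lem:propgamma-strongpliable} gives. It is cleaner, though, to run the same argument uniformly for every step. Set $U_j = S_1\cup\dots\cup S_j\cup C$ and $D_j=S_0-U_j$. The induction statement will be: if $D_j\neq\emptyset$, then $U_j\in\Fsub$, $U_j$ crosses $S_0$, and $D_j\in\Fsub$.

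\emph{Proving $U_j\in\Fsub$.} Since $S_j$ crosses $C$, structural submodularity gives that $S_j\cup C$ or $S_j\cap C$ is in $\Fsub$. The set $S_j\cap C$ is a proper nonempty subset of $C$, so it is excluded, hence $S_j\cup C\in\Fsub$. To get $U_j$, I would instead work with $U_{j-1}$ and $S_j$, which are not disjoint since $C$ meets $S_j$. I see two options.
\begin{itemize}
\item Apply structural submodularity to $U_{j-1}$ and $S_j\cup C$. Their intersection is $C\cup(S_j\cap U_{j-1}) = C$ (using that the $S_i$ are disjoint), and their union is $U_j$. The intersection $C$ lies in $\Fsub$, so this gives no information about the union.
\item Apply it to $U_{j-1}$ and $S_j$ directly. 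Their intersection is $S_j\cap C$, a proper nonempty subset of $C$, which is not in $\Fsub$. So if $U_{j-1}$ and $S_j$ cross, their union $U_j$ is in $\Fsub$.
\end{itemize}
Crossing requires four nonempty regions. The intersection $S_j\cap C$ is nonempty. $S_j - U_{j-1} = S_j - C$ is nonempty because $C$ crosses $S_j$. $U_{j-1}-S_j \supseteq S_1$ (or $C - S_j$ when $j=1$) is nonempty. Finally, $V-U_j\supseteq V-(S_0\cup C)$ is nonempty because $S_0$ and $C$ cross. So $U_j\in\Fsub$. The induction on $j$ is needed to have $U_{j-1}\in\Fsub$; the base is $U_0=C$.

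\emph{Concluding $D_j\in\Fsub$.} Assume $D_j\neq\emptyset$. Then $S_0$ and $U_j$ cross: $S_0\cap U_j\supseteq S_1\neq\emptyset$; $S_0-U_j=D_j\neq\emptyset$; $U_j-S_0 = C-S_0\neq\emptyset$; and $V-(S_0\cup U_j)=V-(S_0\cup C)\neq\emptyset$. By structural submodularity, $S_0-U_j$ or $U_j-S_0$ lies in $\Fsub$. The second is $C-S_0$, a proper nonempty subset of $C$, which is impossible. Hence $D_j = S_0-U_j\in\Fsub$.

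The induction actually carries only $U_j\in\Fsub$; the statement about $D_j$ is derived at the end, for $j=k$ alone. Note also that symmetry enters only through Lemma~\ref{lem:disjointcores-strongpliable}. The main step to check carefully is the crossing of $U_{j-1}$ and $S_j$. It relies on the disjointness of the $S_i$, so that $S_j\cap U_{j-1}=S_j\cap C$, and on $C$ crossing each $S_i$. I expect no further obstacle.
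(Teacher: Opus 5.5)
Your proof is correct, but it is built around a different invariant than the paper's.

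\textbf{The paper's route.} It inducts on $k$ and tracks the remainder.
\begin{itemize}
\item It takes $D_j=S_0-(S_1\cup\dots\cup S_j\cup C)\in\Fsub$ from the induction hypothesis, with Lemma~\ref{lem:propgamma-strongpliable} as the base case.
\item It checks that $D_j$ crosses $S_{j+1}$; the region $D_j-S_{j+1}=D_{j+1}$ is nonempty by the standing assumption.
\item It applies the difference half of structural submodularity. Since $S_{j+1}-D_j=S_{j+1}\cap C$ is ruled out by minimality of $C$, it gets $D_{j+1}\in\Fsub$.
\end{itemize}

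\textbf{Your route.} You track the complementary set $U_j=C\cup S_1\cup\dots\cup S_j$. You grow it using the intersection/union half, ruling out $U_{j-1}\cap S_j=S_j\cap C$. You invoke the difference half only once, for the pair $S_0,U_k$. Your crossing checks for $U_{j-1},S_j$ and for $S_0,U_k$ are all valid. In effect, you generalize the proof of Lemma~\ref{lem:propgamma-strongpliable} from one set to $k$ sets, rather than using that lemma as a black box.

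\textbf{What each buys.} In your version, $U_j\in\Fsub$ holds unconditionally, so nonemptiness of $D_k$ is needed only in the final step. You also get $C\cup S_1\cup\dots\cup S_k\in\Fsub$ as a by-product. The paper's version reuses the $k=1$ case, and its intermediate sets are exactly the remainders $D_j$ drawn in the figure.

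\textbf{One correction.} That no proper subset of $C$ lies in $\Fsub$ is part of the hypothesis of \propgamstar{} (since $C\in\Csub$). It is not a consequence of Lemma~\ref{lem:disjointcores-strongpliable}. So your remark that symmetry enters through that lemma is misplaced: your argument does not use symmetry at all.
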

\begin{proof}
Let $\hE$ be any set of edges, and let $S_0, S_1,\dots,S_k, C \in\Fsub$
satisfy the conditions of \propgamstar{}.
We may assume that $S_0 - (S_1\cup\dots\cup{S_k}\cup C)$ is nonempty.

We apply induction on $k$.

\begin{description}
\item{Induction basis, $k=1$:}
Then \propgamstar{} is the same as \propgamma{}, so
the result follows from Lemma~\ref{lem:propgamma-strongpliable}.

\item{Induction step:}
By the induction hypothesis,
the result holds for any $j$ pairwise disjoint (proper) subsets $S'_1,\dots,S'_j$ of $S_0$,
where $1\leq{j}<k$.
Let $S_0, S_1,\dots,S_{j+1}, C \in\Fsub$
satisfy the conditions of \propgamstar{}.
By the induction hypothesis,
the set $D_j = S_0 - (S_1\cup\dots\cup{S_j}\cup C)$ is nonempty and it is in $\Fsub$.
We claim that $D_j$ crosses the set $S_{j+1}$.
To verify this, note that
\begin{itemize}
\item[]
$V - (D_j\cup{S_{j+1}})$ is nonempty (since $S_0$ crosses $C$, $V-(S_0\cup{C})$ is nonempty),
\item[] $D_j\cap{S_{j+1}}={S_{j+1}}-C$ is nonempty (since ${S_{j+1}}$ crosses $C$),
\item[] $D_j - {S_{j+1}}$ is nonempty
	(by assumption, $S_0 - (S_1\cup\dots\cup{S_k}\cup C)$ is nonempty), and
\item[] ${S_{j+1}}-D_j = {S_{j+1}}\cap{C}$ is nonempty (since ${S_{j+1}}$ crosses $C$).
\end{itemize}
Since $D_j,{S_{j+1}}$ cross, $D_j-{S_{j+1}} = S_0 - (S_1\cup\dots\cup{S_j}\cup{S_{j+1}}\cup C)$
is in $\Fsub$, by structural submodularity; 
observe that ${S_{j+1}}-D_j = {S_{j+1}}\cap{C}$ is a proper subset of $C$ so it is not in $\Fsub$.
\end{description}
\end{proof}

Let $\hE$ be any set of edges.
Let $\hJ$ be an inclusion~minimal set of links that covers $\Csub$,
and let $\hL \subseteq \Fsub$ be a laminar family of witness~sets of $\hJ$,
see \cite[Lemma~3.2]{BCGI24}.
By the sparse crossing property of $\Fsub$ (see Fact~\ref{fact:restriction}),
each set in $\hL$ crosses zero or one sets in $\Csub$.

We define $\Ls$ to be the subfamily of sets in $\hL$ that cross at least one set of $\Csub$;
thus, $\Ls = \{ S\in\hL\,:\, \exists C\in\Csub ~\text{such that}~ S ~\text{crosses}~ C\}$.
Clearly, the sparse crossing property of $\Fsub$ implies that the number of crossing pairs
$(S,C)$ such that $S\in\hL$ and $C\in\Csub$ is $|\Ls|$;
note that the sets in $\hL - \Ls$ are not relevant for our analysis.

Now, our goal is to show that $|\Ls| \leq 2 \; |\Csub|$.
Before presenting our proof, let us informally sketch our analysis.
We view $\Ls\cup\{V\}$ as a rooted tree,
then we map each set $C\in\Csub$ to the smallest set $S\in\Ls\cup\{V\}$ that contains $C$
and we assign the color red to the tree-node corresponding to $S$;
finally, we show that each tree-node (other than the tree-node of $\{V\}$)
is red or it has a child tree-node that is red.
Clearly, the number of red tree-nodes is $\leq|\Csub|$, and it follows that $|\Ls|\leq2 \; |\Csub|$.
In other words, each set $C\in\Csub$ handles $\leq2$ tree-nodes, so we have $|\Ls|\leq2 \; |\Csub|$.

Next, we present our proof of $|\Ls| \leq 2 \; |\Csub|$.

{
Let $\Ls\cup\{V\}$ be the laminar family of witness sets (of $\hJ$)
together with the node-set $V$.
Let $\T$ be a rooted tree that represents $\Ls\cup\{V\}$;
for each set $S\in\Ls\cup\{V\}$, there is a node $v_S$ in $\T$,
and the node $v_V$ is taken to be the root of $\T$.
{Thus, $\T$ has an edge $\{v_Q,v_S\}$ such that $Q\supsetneq{S}$ iff $Q$ is the smallest set of $\Ls$
that properly contains the set $S$ of $\Ls$.}
Let $\psi$ be a mapping from $\Csub$ to $\Ls\cup\{V\}$ that
maps each set $C\in\Csub$ to the smallest set $S\in\Ls\cup\{V\}$ that contains it.
If a node $v_S$ of $\T$ has some set $C\in\Csub$ mapped to its associated set $S$, then we
assign the color red to $v_S$.
}
{
\begin{lemma}\label{lem:disjoint}
Let $S_0$ be a set in $\Ls$.
Let $C_0$ be the unique set of $\Csub$ that crosses $S_0$.
Suppose that $S_0$ contains a set $S_1\in\Ls$ such that 
$v_{S_1}$ is a child of $v_{S_0}$ in the tree $\T$ and, moreover,
$S_1, C_0$ are disjoint.
Then there is a set $C\in\Csub$ such that $\psi(C)=S_0$.
\end{lemma}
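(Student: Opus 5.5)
Since $S_1\in\Ls$, it crosses a unique set $C_1\in\Csub$. By hypothesis $S_1\cap C_0=\emptyset$, so $C_1\neq C_0$, and by Lemma~\ref{lem:disjointcores-strongpliable} the sets $C_0,C_1$ are disjoint. The plan is to show that $S_0$ contains some set $C\in\Csub$ which is contained in no child of $v_{S_0}$; then $\psi(C)=S_0$. I first examine $C_1$ itself. Since $C_1$ meets $S_1\subseteq S_0$, either $C_1\subseteq S_0$ or $C_1$ crosses $S_0$, or else $C_1\cup S_0=V$. Crossing is excluded by the sparse crossing property, because $C_0$ is the unique set of $\Csub$ crossing $S_0$ and $C_1\neq C_0$. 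The case $S_0\cup C_1=V$ is excluded too: otherwise $C_0\subseteq V-S_0\subsetneq C_1$ or $C_0\cap C_1\neq\emptyset$, which is impossible since $C_0$ meets $V-S_0$, $C_0,C_1$ are disjoint, and $C_0$ meets $S_0$. Hence $C_1\subseteq S_0$, while $C_1\not\subseteq S_1$ because the two cross.

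Next, suppose $\psi(C_1)\neq S_0$. Then $C_1$ is contained in some other child $S_2$ of $v_{S_0}$, disjoint from $S_1$ by laminarity. This is not immediately contradictory, so I instead aim at a set of $\Fsub$ built from the structural properties, and then take an inclusion-minimal set inside it. The natural candidate comes from $S_0$ and $S_1$ together with $C_1$. Since $C_1$ crosses $S_1$ and no proper subset of $C_1$ lies in $\Fsub$, the set $S_1\cup C_1$ is in $\Fsub$, exactly as in the proof of Lemma~\ref{lem:propgamma-strongpliable}. I will instead use the complementary form. By symmetry, $V-S_0\in\Fsub$, and it is disjoint from $S_1\cup C_1$. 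The key step is to apply structural submodularity to $S_0$ and $S_1\cup C_1$ or to their complements, to produce the set $D=S_0-(S_1\cup C_1)$ in $\Fsub$ whenever it is nonempty. For this I would consider the pair $V-S_0$ and $S_1\cup C_1$, or else use the uncovered-edge argument of Fact~\ref{fact:restriction}.

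Honestly, the cleaner route is to pick the witness link. Let $\ell_1\in\hJ$ be the link with $\delta_{\hJ}(S_1)=\{\ell_1\}$. Because $\hJ$ covers $C_1$ and $C_0$, and $\hJ$ is inclusion-minimal, I would track where the endpoints of the relevant links lie. The plan: every $C\in\Csub$ inside $S_0$ is covered by $\hJ$. If all such $C$ were mapped to proper descendants, then $S_0-\bigcup\{\text{children}\}$ together with $C_0$ gives the configuration of \propgamstar{}. Concretely, let $S_1,\dots,S_k$ be those children of $v_{S_0}$ that cross $C_0$. Proposition~\ref{propos:gamstar} gives that $D=S_0-(S_1\cup\dots\cup S_k\cup C_0)$ is empty or in $\Fsub$. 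If $D\in\Fsub$, it contains a minimal set $C\in\Csub$. That $C$ avoids $C_0$ and the children crossing $C_0$. It avoids the children disjoint from $C_0$ too, provided we first remove them: I would extend $D$ by subtracting all children, using that a disjoint child $S'$ with $D$ either is contained in $D$ or crosses it, and then applying structural submodularity as in the induction step. Then $C$ meets $S_0$ outside all children, so $\psi(C)=S_0$.

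The main obstacle is the case $D=\emptyset$ and the handling of children disjoint from $C_0$, where the key tool is $S_1$ itself. Here I would use $S_1$ and $S_0$ with symmetry: $V-S_1\in\Fsub$ crosses $S_0$ or contains it, and taking $S_0-S_1$ (via structural submodularity, since $S_1-S_0=\emptyset$ is not available, pliability on $S_0,V-S_1$ gives $S_0\cap(V-S_1)=S_0-S_1$ or $S_0\cup(V-S_1)$) yields $S_0-S_1\in\Fsub$. This set contains $C_0\cap S_0$ only partially, so it contains a minimal set distinct from $C_0$ not inside $S_1$. Iterating over the children is the delicate bookkeeping I expect to be hardest.
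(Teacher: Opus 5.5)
Your first paragraph matches the first half of the paper's proof: $C_1\neq C_0$, the two are disjoint, and $C_1\subseteq S_0$. Your trichotomy omits the case $S_0\subsetneq C_1$, but that case is excluded immediately because $S_0$ meets $C_0$, which is disjoint from $C_1$.

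The gap is the next step. You claim that $C_1$ might lie in another child $S_2$ of $v_{S_0}$, disjoint from $S_1$, and that this is ``not immediately contradictory.'' It is contradictory. $C_1$ crosses $S_1$, so $C_1\cap S_1\neq\emptyset$, and $C_1$ cannot be contained in any set disjoint from $S_1$. In general, suppose some $T\in\Ls$ with $T\subsetneq S_0$ contained $C_1$. Then $T$ meets $S_1$, so by laminarity one of two things holds. Either $T\subseteq S_1$, which is impossible since $C_1\not\subseteq S_1$. Or $S_1\subsetneq T\subsetneq S_0$, which is impossible since $v_{S_1}$ is a child of $v_{S_0}$. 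Hence $\psi(C_1)=S_0$ and the lemma is proved. This one-line laminarity observation is exactly how the paper finishes.

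Because you missed it, the rest of the proposal is a detour that never closes and contains invalid steps. \propgamstar{} only says that $D=S_0-(S_1\cup\dots\cup S_k\cup C_0)$ is empty or in $\Fsub$. You give no argument for $D=\emptyset$, which the lemma's hypotheses do not exclude.

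Your plan to ``subtract'' children disjoint from $C_0$ via structural submodularity also fails. Such a child lies inside $D$ rather than crossing it, so structural submodularity says nothing.

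Likewise, $S_0$ and $V-S_1$ do not cross, since their union is $V$. Pliability for this pair is already satisfied by $S_0-(V-S_1)=S_1$ and $(V-S_1)-S_0=V-S_0$, both of which are in $\Fsub$. So nothing forces $S_0-S_1\in\Fsub$.

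You should drop all of this. Conclude directly from $C_1\subseteq S_0$, $C_1\cap S_1\neq\emptyset$, $C_1\not\subseteq S_1$, and laminarity of $\Ls$ together with the child relation in $\T$.
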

\begin{proof}
Since $S_1\in\Ls$, there exists a unique set $C_1\in\Csub$ that crosses $S_1$.
We claim that $C_1$ is a subset of $S_0$.
Otherwise, if $C_1-S_0$ is nonempty, then $C_1$ would cross $S_0$
(since $C_1\cap{S_0}$ is nonempty, and, moreover,
$C_0,C_1$ are disjoint, hence both $S_0-C_1$ and $V-(C_1\cup{S_0})$ are nonempty).
This gives a contradiction, since $C_0$ is the unique set of $\Csub$ that crosses $S_0$.
Observe that no proper subset of $S_0$ in $\Ls$ contains $C_1$, hence, $\psi(C_1)=S_0$.
\end{proof}

\begin{figure}[htb]
    \centering
     \begin{tikzpicture}[scale=0.75]
        \node at (-1,2) {$S_0$};
        \draw[black, line width=2pt,radius = 50pt] (0,0) circle;
        \draw[red,line width=2pt,radius = 50pt] (2.5,0) circle;
        \node at (3,2) {$C_0$};
        \draw[black, line width=2pt,radius = 15pt] (-0.5,-0.2) circle;
        \node at (-0.5,-0.3) {$S_1$};
        \draw[red, line width=2pt,radius = 15pt] (-0.5,0.5) circle;
        \node at (-0.5,0.6) {$C_1$};
    \end{tikzpicture}
    \caption{Illustration of Lemma~\ref{lem:disjoint}.}
\end{figure}
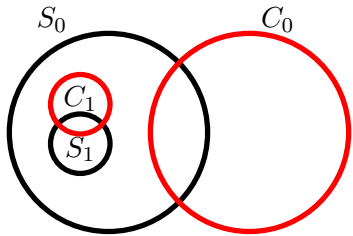

Let $S_0$ be any set in $\Ls$.
If there exist $k\geq1$ children of the node $v_{S_0}$ in the tree $\T$,
then we use the notation $v_{S'_1},\dots,v_{S'_k}$ to denote the children nodes
of $v_{S_0}$ in the tree $\T$, and
we denote the corresponding sets of $\Ls$ by $S'_1,\dots,S'_k$.
We call each of $S'_1,\dots,S'_k$ a child set of $S_0$.
Moreover, if a child node say $v_{S'_i}$ has $k_i\geq1$ children in the tree $\T$,
then we use the notation $v_{S''_{(i,1)}},\dots,v_{S''_{(i,k_i)}}$ to denote the children nodes
of $v_{S'_i}$ in the tree $\T$, and
we denote the corresponding sets of $\Ls$ by $S''_{(i,1)},\dots,S'_{(i,k_i)}$.
We call each of $S''_{(i,1)},\dots,S'_{(i,k_i)}$ a child set of $S'_i$.

\begin{lemma}\label{lem:empty-remainder}
Let $S_0$ be a set in $\Ls$.
Let $C_0$ be the unique set of $\Csub$ that crosses $S_0$.
Suppose that there is no set $C\in\Csub$ such that $\psi(C)=S_0$.
Then $S_0$ has $k\geq1$ children sets $S'_1,\dots,S'_k$ such that
	$C_0$ crosses each of $S'_1,\dots,S'_k$, and, moreover,
	$S_0 - (C_0 \cup S'_1 \cup \dots \cup S'_k)$ is the empty set.
\end{lemma}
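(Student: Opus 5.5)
We argue in three steps: children of $S_0$ must meet $C_0$, they must cross it, and then the leftover set must be empty.

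\emph{Step 1: every child of $S_0$ crosses $C_0$.} Let $S'$ be any child set of $S_0$. If $S'\cap C_0=\emptyset$, then Lemma~\ref{lem:disjoint} yields $C\in\Csub$ with $\psi(C)=S_0$, contradicting the hypothesis. So every child set meets $C_0$. A child set cannot contain $C_0$, because $S'\subseteq S_0$ and $C_0\not\subseteq S_0$. Hence $S'-C_0$ is nonempty would give crossing, since $V-(S'\cup C_0)\supseteq V-(S_0\cup C_0)\neq\emptyset$. It remains to rule out $S'\subseteq C_0$. In that case $S'\subsetneq C_0$ with $S'\in\Fsub$, which is impossible because $C_0$ is inclusion-minimal in $\Fsub$ (recall $\Ls\subseteq\hL\subseteq\Fsub$). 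Therefore every child set $S'_i$ crosses $C_0$. The children are pairwise disjoint proper subsets of $S_0$, by laminarity of $\Ls$.

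\emph{Step 2: $S_0$ has at least one child.} Suppose $S_0$ has no children. Consider $D=S_0-C_0$, which is nonempty because $S_0$ crosses $C_0$. By symmetry and structural submodularity, one of $S_0-C_0$ and $C_0-S_0$ lies in $\Fsub$. The set $C_0-S_0$ is a proper subset of $C_0$, so it is not in $\Fsub$; hence $D\in\Fsub$. Then $D$ contains some minimal set $C\in\Csub$. This $C$ is disjoint from $C_0$, by Lemma~\ref{lem:disjointcores-strongpliable}, and $C\subseteq S_0$. Since there are no children, the smallest set of $\Ls\cup\{V\}$ containing $C$ is $S_0$. One should check that no other set of $\Ls$ strictly inside $S_0$ could contain $C$; such a set would be a descendant of $S_0$, and there are none. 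Thus $\psi(C)=S_0$, a contradiction. So $k\geq1$.

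\emph{Step 3: the remainder is empty.} With $k\geq1$ and all $S'_i$ crossing $C_0$, apply Proposition~\ref{propos:gamstar} (\propgamstar{}) to $S_0,S'_1,\dots,S'_k$ and $C_0$. It gives that $R=S_0-(C_0\cup S'_1\cup\dots\cup S'_k)$ is either empty or lies in $\Fsub$. Suppose $R\in\Fsub$. Then $R$ contains some $C\in\Csub$, which is disjoint from $C_0$ by Lemma~\ref{lem:disjointcores-strongpliable}, and $C\subseteq S_0$. Let $Q$ be the smallest set of $\Ls\cup\{V\}$ containing $C$. Then $Q\subseteq S_0$ and, by laminarity, either $Q=S_0$ or $Q$ is contained in some child $S'_i$. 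The latter fails because $C\cap S'_i=\emptyset$ and $C$ is nonempty. Hence $\psi(C)=S_0$, contradicting the hypothesis, so $R=\emptyset$.

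The main obstacle is modest: we must be careful that sets of $\Ls$ inside $S_0$ are exactly descendants of children, which holds by the tree structure of $\Ls\cup\{V\}$.
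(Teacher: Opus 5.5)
Your proof is correct and follows the paper's case analysis: children disjoint from $C_0$ are handled by Lemma~\ref{lem:disjoint}, the childless case by exhibiting a set of $\Csub$ inside $S_0$, and a nonempty remainder by \propgamstar{}. Your Step~1 makes explicit something the paper's case split leaves implicit, namely why a child that meets $C_0$ must actually cross it. Your detour through $S_0-C_0$ in Step~2 is unnecessary, since $S_0\in\Fsub$ itself already contains a set of $\Csub$.
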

\begin{proof}
Let $S_0\in\Ls$ and $C_0\in\Csub$ satisfy the statement of the lemma.
Our proof applies case analysis and we handle the cases using
Lemma~\ref{lem:disjoint} or \propgamstar{}.

\begin{description}[style=nextline]
{
\item[(Case 1) $S_0$ has no children sets:]
Since $S_0\in\Fsub$, it contains an inclusion~minimal set $C \in\Csub\subseteq\Fsub$;
clearly, $\psi(C)=S_0$.

\item[(Case 2) $S_0$ has a child set say $S'_i$ that is disjoint from $C_0$:]
Then, by Lemma~\ref{lem:disjoint}, for the unique set $C\in\Csub$ that crosses $S'_i$
we have $\psi(C)=S_0$.

\item[(Case 3) $S_0$ has $k\geq1$ children sets $S'_1,\dots,S'_k$,
$C_0$ crosses each of $S'_1,\dots,S'_k$, and
$S_0 - (C_0 \cup S'_1 \cup \dots \cup S'_k)$ is nonempty:]
By \propgamstar{}, the set $S_0 - (C_0 \cup S'_1 \cup \dots \cup S'_k)$ is in $\Fsub$,
hence, it contains an inclusion~minimal set $C \in\Csub\subseteq\Fsub$;
clearly, $\psi(C)=S_0$.

\item[] The remaining case is the one described in the conclusion of the lemma.
}
\end{description}
\end{proof}

Our analysis relies on the next lemma. This lemma states that for
each set $S\in\Ls$, there exists a set $C\in\Csub$ such that
$\psi(C)=S$ or $\psi(C)$ is a child of $S$ in the tree $\T$.

\begin{lemma}\label{lem:5approx}
For every set $S\in\Ls$ there exists a set $C\in\Csub$ such that
$\psi(C)=S$, or
$\psi(C)=S'$ where $S'\in\Ls$ is such that
$S'\subsetneq{S}$ and $S$ is the smallest set in $\Ls$ that contains $S'$.
\end{lemma}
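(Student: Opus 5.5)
The plan is to argue by contradiction, using Lemma~\ref{lem:empty-remainder} twice (once at $S$ and once at each of its children) and then using the witness property of $\hL$ to rule out the resulting configuration. Fix $S=S_0\in\Ls$ and let $C_0$ be the unique set of $\Csub$ crossing $S_0$. If some $C\in\Csub$ has $\psi(C)=S_0$, we are done. Otherwise Lemma~\ref{lem:empty-remainder} gives children sets $S'_1,\dots,S'_k$ ($k\geq1$) of $S_0$, each crossed by $C_0$, with $S_0-(C_0\cup S'_1\cup\dots\cup S'_k)=\emptyset$. Assume, for contradiction, that no $v_{S'_i}$ is red. Each $S'_i$ is crossed by $C_0$, so by the sparse crossing property (Fact~\ref{fact:restriction} with Lemma~\ref{lem:sparsecrossing-strongpliable}) $C_0$ is the unique set of $\Csub$ crossing $S'_i$. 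Lemma~\ref{lem:empty-remainder} applied to $S'_i$ then gives children $S''_{(i,1)},\dots,S''_{(i,k_i)}$ of $S'_i$, all crossed by $C_0$, with $S'_i-C_0=\bigcup_j (S''_{(i,j)}-C_0)$. In particular every $S'_i$ has children, so Case~1 of Lemma~\ref{lem:empty-remainder} never occurs at this level.

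Next I would record which ``differences with $C_0$'' lie in $\Fsub$. For any $X\in\Fsub$ crossed by $C_0$, the set $X\cap C_0$ is a proper subset of $C_0$ and so is not in $\Fsub$. Structural submodularity then forces $X\cup C_0\in\Fsub$, and since $C_0-X$ is also a proper subset of $C_0$, it forces $X-C_0\in\Fsub$. Applying this to $S_0$, to each $S'_i$, and to each $S''_{(i,j)}$ shows that all of $S_0-C_0$, $S'_i-C_0$ and $S''_{(i,j)}-C_0$ lie in $\Fsub$. Moreover, $S_0-C_0$ is the disjoint union of the $S'_i-C_0$, and each $S'_i-C_0$ is the disjoint union of the $S''_{(i,j)}-C_0$.

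The final step, which I expect to be the main obstacle, is to turn this nested partition into a contradiction with $\hJ$ and the witness sets. Let $\ell_1$ be the link with $\delta_{\hJ}(S'_1)=\{\ell_1\}$, and similarly let $\ell_{(1,j)}$ be the link for $S''_{(1,j)}$. The set $S'_1-C_0\in\Fsub$ must be covered by some link $e\in\hJ$, since $\hJ$ covers $\Csub$ and hence every set of $\Fsub$, as each contains a minimal set. Such an $e$ has an end in some $S''_{(1,j)}-C_0$. Its other end lies either outside $S'_1$, in which case $e=\ell_1$, or inside $S'_1\cap C_0$. In either case $e$ leaves $S''_{(1,j)}$ or separates $S''_{(1,j)}-C_0$ from $S''_{(1,j)}\cap C_0$. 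I would use the disjointness of the $S''_{(1,j)}$ together with $\delta_{\hJ}(S''_{(1,j)})=\{\ell_{(1,j)}\}$ to show that $\ell_1=\ell_{(1,j)}$ for some $j$, or that a link inside $C_0$ makes some link of $\hJ$ redundant for covering $\Csub$, contradicting the inclusion-minimality of $\hJ$. My first attempt would be the equality $\ell_1=\ell_{(1,j)}$, which contradicts the fact that distinct links have distinct witness sets in $\hL$. If this direct link chase fails, the fallback is to iterate the descent: a non-red node all of whose children are non-red forces an infinite chain of descendants crossed by $C_0$, each with children. I would then try to show that the ``no red child'' hypothesis propagates down the chain, which is impossible because $\T$ is finite and its leaves are red by Case~1.
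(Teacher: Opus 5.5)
Your setup is correct: the two applications of Lemma~\ref{lem:empty-remainder} and the partitions $S_0-C_0=\bigcup_i(S'_i-C_0)$ and $S'_1-C_0=\bigcup_j(S''_{(1,j)}-C_0)$ give exactly the configuration the paper reaches. The gap is that the step meant to produce the contradiction is never carried out, and the one concrete claim it rests on is false.

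\textbf{Why the final step fails.} $\hJ$ covering $\Csub$ does \emph{not} imply that $\hJ$ covers every set of $\Fsub$. A link covering a minimal set $C\subseteq X$ may have its other end in $X-C$, and then $X$ is not covered. For example, $\Fsub$ may contain $C_1$, $C_2$ and $C_1\cup C_2$, while $\hJ$ is a single link between $C_1$ and $C_2$. In general you only know that the sets of $\Csub$ and the witness sets in $\hL$ are covered. So there is no reason for a link $e\in\hJ$ covering $S'_1-C_0$ to exist.

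Even granting such an $e$, the subcase where $e$ joins $S''_{(1,j)}-C_0$ to $S''_{(1,j)}\cap C_0$ yields nothing. Such an $e$ lies inside $S''_{(1,j)}$ and merely covers $C_0$.

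The fallback descent also fails. The hypothesis that no child of $S_0$ is red says nothing about the grandchildren $S''_{(i,j)}$, which may well be red, so finiteness of $\T$ gives no contradiction.

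\textbf{The missing idea.} Follow the witness link $\ell_1$ of $S'_1$ itself, rather than some link covering $S'_1-C_0$. Distinct sets of $\hL$ are witness sets of distinct links, so $\ell_1$ lies in none of $\delta_{\hJ}(S_0)$, $\delta_{\hJ}(S'_i)$ for $i\neq 1$, or $\delta_{\hJ}(S''_{(1,j)})$.
\begin{itemize}
\item The end of $\ell_1$ outside $S'_1$ must lie in $S_0-(S'_1\cup\dots\cup S'_k)$, using disjointness of the children.
\item The end inside $S'_1$ must lie in $S'_1-\bigcup_j S''_{(1,j)}$. Your subcase ``$e=\ell_1$ entering some $S''_{(1,j)}$ is impossible'' is exactly this half of the argument.
\end{itemize}
Your two partition identities say precisely that both of these regions are contained in $C_0$. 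Hence $\ell_1$ has both ends in $C_0$. Since the sets of $\Csub$ are pairwise disjoint, $\ell_1$ covers no set of $\Csub$. Then $\hJ-\{\ell_1\}$ still covers $\Csub$, contradicting the inclusion-minimality of $\hJ$.

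This argument needs only that $S_0$ and $S'_1$ are not red, not that all children are non-red. The facts $X-C_0\in\Fsub$ from your second paragraph are not used.
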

{
\begin{proof}
Let $S_0$ be any set in $\Ls$.
Since $S_0\in\Ls$, there is a unique set $C_0\in\Csub$ that crosses $S_0$.

We start by applying Lemma~\ref{lem:empty-remainder} to $S_0$.
If there is a set $C\in\Csub$ such that $\psi(C)=S_0$, then we are done.

Otherwise, by Lemma~\ref{lem:empty-remainder},
     $S_0$ has $k\geq1$ children sets $S'_1,\dots,S'_k$ such that
	$C_0$ crosses each of $S'_1,\dots,S'_k$, and, moreover,
	$S_0 - (C_0 \cup S'_1 \cup \dots \cup S'_k)$ is the empty set,

Next, we apply Lemma~\ref{lem:empty-remainder} to the child set $S'_1$ of $S_0$
(since $k\geq1$, $S'_1$ exists).
If there is a set $C\in\Csub$ such that $\psi(C)=S'_1$, then we are done.

Otherwise, by Lemma~\ref{lem:empty-remainder},
     $S'_1$ has $k_1\geq1$ children sets $S''_{(1,1)},\dots,S''_{(1,k_1)}$ such that
	$C_0$ crosses each of $S''_{(1,1)},\dots,S''_{(1,k_1)}$, and
	$S'_1 - (C_0 \cup S''_{(1,1)} \cup \dots \cup S''_{(1,k_1)})$ is the empty set,

We handle this case via a contradiction argument.

Recall that every $S\in\Ls$ is the unique witness set of some link $\ell$ of $\hJ$
(recall that $\hJ$ is an inclusion~minimal link-set that covers $\Csub$).
Consider the link $\{x,y\}$ corresponding to the set $S'_1\in\Ls$.
Clearly, this link is present in $\delta_{\hJ}(S'_1)$ and this link is absent from
each of $\delta_{\hJ}(S_0), \delta_{\hJ}(S'_2),\dots,\delta_{\hJ}(S'_k)$;
moreover, this link is absent from each of
	$\delta_{\hJ}(S''_{(1,1)}),\dots,\delta_{\hJ}(S''_{(1,k_1)})$.
Thus, one end of the link, say $x$, is in $S'_1 - (S''_{(1,1)} \cup\dots\cup S''_{(1,k_1)})$,
and the other end of the link, $y$, is in $S_0 - (S'_1 \cup\dots\cup S'_k)$.
Recall that
	$S_0 -  (C_0 \cup S'_1 \cup \dots \cup S'_k)$ is the empty set, and
	$S'_1 - (C_0 \cup S''_{(1,1)} \cup \dots \cup S''_{(1,k_1)})$ is the empty set.
Hence, $x\in C_0$ and $y\in C_0$.
We have a contradiction, since the link $\{x,y\}$
does not cover any set of $\Csub$ (recall that the sets in $\Csub$ are pairwise disjoint);
therefore, $\hJ$ is not an inclusion~minimal cover of $\Csub$.
\end{proof}
}
}
{
   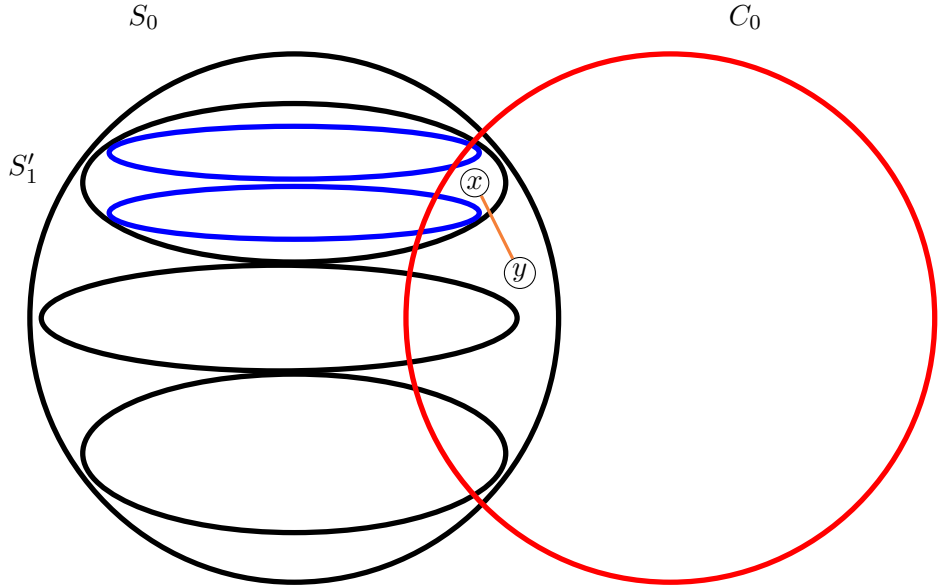
\begin{figure}[htb]
       \centering
       \begin{tikzpicture}[scale=2]
        \node at (-1,2) {$S_0$};
        \draw[black, line width=2pt,radius = 50pt] (0,0) circle;
        \node at (3,2) {$C_0$};
        \draw[black,line width=2pt] (-0.1,0) ellipse [x radius = 45pt, y radius = 10pt];
        \draw[black,line width=2pt] (0,-0.9) ellipse [x radius = 40pt, y radius = 15pt];
        \draw[black,line width=2pt] (0,0.9) ellipse [x radius = 40pt, y radius = 15pt];
        \draw[blue,line width=2pt] (0,1.1) ellipse [x radius = 35pt, y radius = 5pt];
        \draw[blue,line width=2pt] (0,0.7) ellipse [x radius = 35pt, y radius = 5pt];
        \draw[red,line width=2pt,radius = 50pt] (2.5,0) circle;
        \node at (-1.8,1) {$S'_1$};

         \node[draw,circle,black,fill=none,inner sep = 1pt] (a) at (1.2,0.9) {$x$};
         \node[draw,circle,black,fill=none,inner sep = 1pt] (b) at (1.5,0.3) {$y$};
         \draw[very thick, Orange] (a) edge (b);
    \end{tikzpicture}
       \caption{Illustration of the key case of Lemma~\ref{lem:5approx}.}
   \end{figure}
}

Summarizing the above discussion,
each set $C\in\Csub$ is mapped to a set $\psi(C)\in\Ls$;
moreover, for each set $S\in\Ls$,
there is a set $C\in\Csub$ such that either $\psi(C)=S$ or $\psi(C)=S'$
where $S'$ is a child set of $S$ in the tree $\T$.
The required inequality follows.

\begin{proposition} We have $\displaystyle |\Ls| \leq 2 \; |\Csub|. $
\end{proposition}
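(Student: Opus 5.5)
The plan is a charging argument built on Lemma~\ref{lem:5approx}. We define a map $\phi:\Ls\to\Csub$ and show that every set $C\in\Csub$ has at most two preimages under $\phi$. Together with the fact that $\phi$ is defined on all of $\Ls$, this gives $|\Ls|\leq 2\,|\Csub|$.

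\textbf{Defining $\phi$.} Let $S\in\Ls$. By Lemma~\ref{lem:5approx}, at least one of the following holds:
\begin{itemize}
\item there is some $C\in\Csub$ with $\psi(C)=S$; or
\item there is some $C\in\Csub$ with $\psi(C)=S'$, where $v_{S'}$ is a child of $v_S$ in $\T$.
\end{itemize}
If the first case holds, fix such a $C$ and set $\phi(S)=C$. Otherwise fix a $C$ as in the second case and set $\phi(S)=C$. This defines $\phi$ on all of $\Ls$.

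\textbf{At most two preimages.} Fix $C\in\Csub$ and consider the tree node $v_{\psi(C)}$. Since $\psi$ is a function, $\psi(C)$ is a single set of $\Ls\cup\{V\}$. If $\phi(S)=C$, then by construction either $S=\psi(C)$, or $\psi(C)$ is a child set of $S$. In the second case $v_S$ is the parent of $v_{\psi(C)}$ in $\T$, and that parent is unique. Hence $\phi^{-1}(C)$ is contained in the set consisting of $\psi(C)$ and the parent set of $\psi(C)$. Only those of these two sets that lie in $\Ls$ can occur, since $V\notin\Ls$. Therefore $|\phi^{-1}(C)|\leq 2$.

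\textbf{Summing.} Adding over $C\in\Csub$ gives $|\Ls|=\sum_{C\in\Csub}|\phi^{-1}(C)|\leq 2\,|\Csub|$.

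\textbf{Main obstacle.} The counting itself is routine. The only point needing care is confirming that the objects from Lemma~\ref{lem:5approx} are as used above:
\begin{itemize}
\item $\psi$ is single-valued, because laminarity of $\Ls\cup\{V\}$ makes the smallest containing set unique.
\item ``$S$ is the smallest set in $\Ls$ that contains $S'$'' means exactly that $v_S$ is the parent of $v_{S'}$ in $\T$.
\end{itemize}
Once these two facts are checked, the proposition follows.
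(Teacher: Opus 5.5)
Your proof is correct and is essentially the paper's argument. The paper colors $v_{\psi(C)}$ red, observes that there are at most $|\Csub|$ red nodes, and uses Lemma~\ref{lem:5approx} to conclude that every node of $\Ls$ is either red or the unique parent of a red node; your charging map $\phi$, under which each $C$ is charged only by $\psi(C)$ and its parent, is the same count made explicit (including your careful treatment of uniqueness via laminarity and of the case $\psi(C)=V$).
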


Then, by Theorem~\ref{thm:Bcrossing}, we get the following result.

\begin{theorem}
The WGMV primal-dual algorithm has approximation ratio $5$ when it is applied to
the problem of covering a pliable family that satisfies symmetry and structural submodularity.
\end{theorem}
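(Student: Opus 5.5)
The theorem follows by combining Bansal's Theorem~\ref{thm:Bcrossing} with the bound $|\Ls|\leq 2\,|\Csub|$ from the preceding proposition. The argument goes through the Remark, which lets us work only with the laminar family of witness sets $\hL$.

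Fix an arbitrary edge-set $\hE$, such as the current solution at some iteration of the WGMV algorithm. By Fact~\ref{fact:restriction}, the family $\Fsub$ is again pliable, symmetric and structurally submodular. Consequently:
\begin{itemize}
\item Lemma~\ref{lem:disjointcores-strongpliable} applies to $\Fsub$, so the sets of $\Csub$ are pairwise disjoint.
\item Lemma~\ref{lem:sparsecrossing-strongpliable} applies to $\Fsub$, so each set of $\Fsub$ crosses at most one set of $\Csub$.
\item Proposition~\ref{propos:gamstar} applies, so $\F$ satisfies \propgamstar{}.
\end{itemize}

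Next, let $\hJ$ be an inclusion-minimal link set covering $\Csub$, and let $\hL$ be a laminar family of witness sets for $\hJ$, as given by \cite[Lemma~3.2]{BCGI24}. By sparse crossing, the number of crossing pairs $(S,C)$ with $S\in\hL$ and $C\in\Csub$ equals $|\Ls|$. Lemma~\ref{lem:5approx} shows that every node of $\T$ other than the root is red or has a red child. I would count as follows: charge each $S\in\Ls$ to a red node, namely $v_S$ itself or a red child of $v_S$. A red node $v_{S'}$ can receive a charge only from itself and from its unique parent. Hence the number of charges is at most twice the number of red nodes. Since $\psi$ is a function from $\Csub$, the number of red nodes is at most $|\Csub|$, which gives $|\Ls|\leq 2|\Csub|$. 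The only point needing care is that the parent of a red node may be the root $v_V$, but the root is not in $\Ls$, so it is not charged and this causes no issue.

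Finally, by the Remark, the bound on crossing pairs between $\hL$ and $\Csub$ holds with $\hat\rho=2$ for every $\hE$. Bansal's proof of Theorem~\ref{thm:Bcrossing} then yields approximation ratio $3+\hat\rho=5$.

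The main subtlety is the step where the Remark substitutes the witness family $\hL$ for an arbitrary laminar subfamily $\Flamsub$ in the definition of $\rho(\F)$. We do not prove $\rho(\F)\leq 2$ for all laminar subfamilies. Instead we rely on Bansal's analysis only invoking the crossing bound for the witness family of a minimal cover of $\Csub$. I would state this dependence explicitly rather than claiming the bound on $\rho(\F)$ itself.
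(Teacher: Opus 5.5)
Your proposal is correct and follows the paper's route: apply Fact~\ref{fact:restriction} and the preceding lemmas to $\Fsub$, use Lemma~\ref{lem:5approx} to obtain $|\Ls|\leq 2\,|\Csub|$, and then invoke Theorem~\ref{thm:Bcrossing} through the Remark with $\hat\rho=2$. Your explicit charging argument (a red node is charged at most by itself and by its parent, and the root is never charged) spells out the paper's ``the required inequality follows''. Your caveat that only the witness family $\hL$ is bounded, and not every laminar subfamily, is exactly what the paper's Remark says.
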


}

\bibliographystyle{plainurl}
\bibliography{sbc-5approx-arxiv}

\end{document}